\theoremstyle{plain}
\newtheorem{lemma}{Lemma}
\newtheorem{theorem}{Theorem}
\newtheorem{corollary}[theorem]{Corollary}
\newtheorem{proposition}[theorem]{Proposition}
\theoremstyle{definition}
\newtheorem{definition}{Definition}
\newtheorem{remark}{Remark}
\newtheorem{observation}{Observation}
\newtheorem{example}{Example}
\title{Avoiding $2$-binomial squares and cubes}
\author[M. Rao]{Micha{\"e}l Rao}
\address[M. Rao]{CNRS, LIP, ENS Lyon, 46 allée d'Italie, 69364 Lyon Cedex 07 - France,
{\tt michael.rao@ens-lyon.fr}
}
\author[M. Rigo]{Michel Rigo}
\address[M. Rigo]{Dept of Math., University of Li{\`e}ge, Grande traverse 12 (B37), B-4000 Li{\`ege}, Belgium, {\tt M.Rigo@ulg.ac.be}}
\author[P. Salimov]{Pavel Salimov\dag{}}
\thanks{\dag{}The third author is supported by the Russian President's grant no. MK-4075.2012.1 and Russian Foundation for Basic Research grants no. 12-01-00089 and no. 11-01-00997 and by a University of Li{\`e}ge post-doctoral grant.}
\address[P. Salimov]{Dept of Math., University of Li{\`e}ge, Grande traverse 12 (B37), B-4000 Li{\`ege}, Belgium \and Sobolev Institute of Math., 4 Acad. Koptyug avenue, 630090 Novosibirsk, Russia}
\begin{document}

\begin{abstract}
    Two finite words $u,v$ are $2$-binomially equivalent if, for all
    words $x$ of length at most $2$, the number of occurrences of $x$ 
    as a (scattered) subword of $u$ is equal to the number of occurrences of $x$ in $v$. This notion
    is a refinement of the usual abelian equivalence. A $2$-binomial
    square is a word $uv$ where $u$ and $v$ are $2$-binomially
    equivalent.

    In this paper, considering pure morphic words, we prove that $2$-binomial squares (resp. cubes)
    are avoidable over a $3$-letter (resp. $2$-letter) alphabet. The sizes of the alphabets 
     are optimal.
\end{abstract}

\maketitle

\section{Introduction}

A {\em square} (resp. {\em cube}) is a non-empty word of the form $xx$ (resp. $xxx$). Since the work of Thue, it is well-known that there exists an infinite squarefree word over a ternary alphabet, and an infinite cubefree word over a binary alphabet \cite{Thue1,Thue2}. A main direction of research in combinatorics on words is about the avoidance of a pattern, and the size of the alphabet is a parameter of the problem. 

A possible and widely studied  generalization of squarefreeness is to consider an abelian framework. A non-empty word is an {\em abelian square} (resp. {\em abelian cube}) if it is of the form $xy$ (resp. $xyz$) where $y$ is a permutation of $x$ (resp. $y$ and $z$ are permutations of $x$). 
Erd\"os raised the question whether abelian squares can be avoided by an infinite word over an alphabet of size $4$~\cite{E}.
Ker{\"a}nen answered positively to this question, with a pure morphic word \cite{Ker}. 
Moreover Dekking has previously obtained an infinite word over a $3$-letter alphabet that avoids abelian cubes, and an infinite binary word that avoids abelian 4-powers \cite{Dek}.
(Note that in all these results, the size of the alphabet is optimal.)

In this paper, we are dealing with another generalization of squarefreeness and cubefreeness. We consider the $2$-binomial equivalence which is a refinement of the abelian equivalence, {\em i.e.}, if two words $x$ and $y$ are $2$-binomially equivalent, then $x$ is a permutation of $y$ (but in general, the converse does not hold, see Example~\ref{exa1} below). This equivalence relation is defined thanks to 
the binomial coefficient $\binom{u}{v}$ of two words $u$ and $v$ which is
the number of times $v$ occurs as a subsequence of $u$ (meaning as a
``scattered'' subword). For more on these binomial coefficients, see for instance \cite[Chap.~6]{Lot}. 
Based on this classical notion, the  $m$-binomial equivalence of two words has been recently introduced \cite{RS}.

\begin{definition}
  Let $m\in\mathbb{N}\cup\{+\infty\}$ and $u,v$ be two words over
  the alphabet $A$. We let $A^{\le m}$ denote the set of words of length at most $m$ over $A$. We say that $u$ and $v$ are {\em $m$-binomially equivalent} if 
  $$\binom{u}{x}=\binom{v}{x},\ \forall x\in A^{\le m}.$$
  We simply write $u\sim_m v$ if $u$ and $v$ are $m$-binomially equivalent. The word $u$ is obtained as a permutation of the letters in $v$ if and only if $u\sim_1 v$. In that case, we say that $u$ and $v$ are {\em abelian equivalent} and we write instead $u\sim_{\mathsf{ab}}v$. Note that if $u\sim_{k+1}v$, then $u\sim_k v$, for all $k\ge 1$.
\end{definition}

\begin{example}\label{exa1}
    The four words $0101110$, $0110101$, $1001101$ and
    $1010011$ are $2$-binomially equivalent. Let $u$ be any of these four words. We have
$$\binom{u}{0}=3,\ \binom{u}{1}=4,\ \binom{u}{00}=3,\ \binom{u}{01}=7,\ \binom{u}{10}=5,\ \binom{u}{11}=6.$$
For instance, the word $0001111$ is abelian equivalent to $0101110$ but these two words are not $2$-binomially equivalent. Let $a$ be a letter. It is clear that $\binom{u}{aa}$ and $\binom{u}{a}$ carry the same information, {\em i.e.}, $\binom{u}{aa}=\binom{|u|_a}{2}$ where $|u|_a$ is the number of occurrences of $a$ in $u$.
\end{example}

A {\em $2$-binomial square} (resp. {\em $2$-binomial cube}) is a non-empty word of the form $xy$ where $x\sim_2 y$ (resp. $x\sim_2 y\sim_2 z$). Squares are avoidable over a $3$-letter alphabet and abelian squares are avoidable over a $4$-letter
alphabet. Since $2$-binomial equivalence lies between abelian equivalence and equality, the question is to determine whether or not $2$-binomial squares are avoidable over a $3$-letter alphabet. We answer positively to this question in Section~\ref{sec2}. The fixed point of the morphism $g:0\mapsto 012, 1\mapsto 02, 2\mapsto 1$ avoids $2$-binomial squares.

In a similar way, cubes are avoidable over a $2$-letter alphabet and abelian squares are avoidable over a $3$-letter
alphabet. The question is to determine whether or not $2$-binomial cubes are avoidable over a $2$-letter alphabet. We also answer positively to this question in Section~\ref{sec3}. The fixed point of the morphism $h:0\mapsto 001, 1\mapsto 011$ avoids $2$-binomial cubes.

\begin{remark}
The $m$-binomial equivalence is not the only way to refine the abelian equivalence. Recently, a notion of $m$-abelian equivalence has been introduced \cite{KSZ}. To define this equivalence, one counts the number $|u|_x$ of occurrences in $u$ of all factors $x$ of length up to $m$ (it is meant factors made of consecutive letters). 
That is $u$ and $v$ are $m$-abelian equivalent if $|u|_x=|v|_x$ for all $x\in A^{\le m}$. 
In that context, the results on avoidance are quite different. 
Over a $3$-letter alphabet $2$-abelian squares are unavoidable:  the longest ternary word which is $2$-abelian squarefree has length $537$ \cite{KH}, and pure morphic words cannot avoid $k$-abelian-squares for every $k$ \cite{9}. 
On the other hand, it has been shown that there exists  a $3$-abelian squarefree morphic word over a $3$-letter alphabet \cite{R}.
Moreover $2$-abelian-cubes can be avoided over a binary alphabet by a morphic word \cite{R}. 
\end{remark}

The number of occurrences of a letter $a$ in a word $u$ will be denoted either by $\binom{u}{a}$ or $|u|_a$. Let $A=\{0,1,\ldots,k\}$ be an alphabet. The {\em Parikh map} is an application $\Psi:A^*\to\mathbb{N}^{k+1}$ such that $\Psi(u)=(|u|_0,\ldots,|u|_{k})^T$. Note that we will deal with column vectors (when multiplying a square matrix with a column vector on its right). In particular, two words are abelian equivalent if and only if they have the same Parikh vector. The mirror of the word $u=u_1u_2\cdots u_k$ is denoted by $\widetilde{u}=u_k\cdots u_2u_1$.

\section{Avoiding $2$-binomial squares over a $3$-letter alphabet}\label{sec2}
Let $A=\{0,1,2\}$ be a $3$-letter alphabet. Let $g:A^*\to A^*$ be the morphism defined by 
$$g:\left\{
\begin{array}{lll}
0&\mapsto&012\\
1&\mapsto&02\\
2&\mapsto&1\\
\end{array}\right. \text{ and thus, }
g^2:\left\{
\begin{array}{lll}
0&\mapsto&012021\\
1&\mapsto&0121\\
2&\mapsto&02.\\
\end{array}\right.$$
It is prolongable on $0$: $g(0)$ has $0$ as a prefix. Hence the limit $\mathbf{x}=\lim_{n\to +\infty} g^n(0)$ is a well-defined infinite word 
$$\mathbf{x}=g^\omega(0)=012021012102012021020121\cdots$$
which is a fixed point of $g$. Since the original work of Thue, this word $\mathbf{x}$ is well-known to avoid (usual) squares. It is sometimes referred to as the {\em ternary Thue--Morse word}. We will make use of the fact that $X=\{012,02,1\}$ is a prefix-code and thus an $\omega$-code: Any finite word in $X^*$ (resp. infinite word in $X^\omega$) has a unique factorization as a product of elements in $X$. Let us make an obvious but useful observation.

\begin{observation}\label{obs:02}
The factorization of $\mathbf{x}$ in terms of the elements in $X$ permits to write $\mathbf{x}$ as  
$$\mathbf{x}=0\, \alpha_1\, 2\, \alpha_2\, 0\, \alpha_3\, 2\, \alpha_4\, 0\, \alpha_5\, 2\, \alpha_6\, 0\cdots$$
where, for all $i\ge 1$, $\alpha_i\in\{\varepsilon,1\}$. 
That is, the image of $\mathbf{x}$ by the morphism $e: 0\mapsto 0, 1\mapsto \varepsilon, 2 \mapsto 2$ (which erases all the $1$'s) is $e(\mathbf{x})=(02)^{\omega}$.
\end{observation}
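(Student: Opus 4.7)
The plan is to exploit the unique factorization of $\mathbf{x}$ into elements of the prefix code $X = \{012, 02, 1\}$. Since $\mathbf{x}$ is a fixed point of $g$, writing $\mathbf{x} = u_1 u_2 u_3 \cdots$ with $u_i \in A$ yields the $X$-factorization $\mathbf{x} = g(u_1)\, g(u_2)\, g(u_3) \cdots$. The image under the erasing morphism $e$ can then be computed block by block: a direct check gives $e(g(0)) = e(g(1)) = 02$ and $e(g(2)) = \varepsilon$, so each block contributes either $02$ or the empty word to $e(\mathbf{x})$. Hence $e(\mathbf{x})$ lies in $\{02\}^* \cup \{02\}^\omega$.

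To promote this to the equality $e(\mathbf{x}) = (02)^\omega$, I would verify that $\mathbf{x}$ contains infinitely many occurrences of $0$: the prefixes $g^n(0)$ of $\mathbf{x}$ grow without bound in length, and every application of $g$ to a letter in $\{0,1\}$ produces a new $0$, so $|g^n(0)|_0 \to \infty$. Each such $0$ contributes one $02$-block to $e(\mathbf{x})$, so $e(\mathbf{x})$ is genuinely infinite, giving $(02)^\omega$.

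The remaining content of the statement, namely that each $\alpha_i$ lies in $\{\varepsilon, 1\}$, reduces to showing that $11$ is not a factor of $\mathbf{x}$. This is immediate from the squarefreeness of $\mathbf{x}$ recalled just above the observation (Thue's classical result): no letter can occur twice in a row. Alternatively, one can argue directly from the $X$-factorization that an occurrence of $11$ would force two consecutive blocks $g(u_i)\, g(u_{i+1})$ with $g(u_i)$ ending in $1$ and $g(u_{i+1})$ starting with $1$, hence $u_i = u_{i+1} = 2$; but $22$ cannot be a factor of $\mathbf{x}$ since no element of $X$ begins with $2$. I do not anticipate any real obstacle; the only slightly delicate point is confirming that $e(\mathbf{x})$ is genuinely infinite rather than merely a finite prefix of $(02)^\omega$, which the count of $0$'s handles.
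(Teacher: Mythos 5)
Your proof is correct, and it is exactly the argument the paper leaves implicit: the paper states this as an unproved ``obvious'' observation justified by the $X$-factorization, and you fill in precisely that reasoning (each block $g(0),g(1)$ maps to $02$ under $e$, $g(2)$ to $\varepsilon$, infinitude of the image from the growing number of $0$'s, and $\alpha_i\in\{\varepsilon,1\}$ from squarefreeness or the block argument excluding $11$ and $22$). No gaps worth noting.
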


The next property is well known. 
For example, it comes from the fact that the image of the ternary Thue--Morse word by the morphism $0\mapsto 011, 1 \mapsto 01, 2\mapsto 0$ is the Thue--Morse word.
However, for the sake of completeness, we give a direct proof here.

\begin{lemma}\label{lem:mirror}
  A word $u$ is a factor occurring in $\mathbf{x}$ if and only if $\widetilde{u}$ is a factor occurring in $\mathbf{x}$. 
\end{lemma}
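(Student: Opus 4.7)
The plan is to exhibit an increasing sequence of palindromic prefixes $P_0, P_1, P_2, \ldots$ of $\mathbf{x}$ with $|P_n|\to\infty$. This will suffice: every finite factor $u$ of $\mathbf{x}$ lies in some prefix $g^n(0)$, hence in some $P_m$; and because $\widetilde{P_m}=P_m$, the mirror $\widetilde{u}$ occurs in $P_m$ too, and therefore in $\mathbf{x}$. The converse direction is symmetric.

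The heart of the argument is the identity
$$g^2(a)\cdot 0 \;=\; 0\cdot\widetilde{g^2(a)} \qquad \text{for every } a\in\{0,1,2\},$$
which I would verify by inspection of the three cases $g^2(0)=012021$, $g^2(1)=0121$, $g^2(2)=02$: each of the three words $0120210$, $01210$, $020$ is a palindrome. A short induction on $|w|$, ``sliding'' the trailing $0$ leftward across one $g^2$-block at a time using the letter identity, extends this to
$$g^2(w)\cdot 0 \;=\; 0\cdot\widetilde{g^2(\widetilde{w})}$$
for every word $w$. Specializing to palindromes $w=\widetilde{w}$ yields the key consequence: \emph{if $w$ is a palindrome, then so is $g^2(w)\cdot 0$}.

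I would then define $P_0$ to be the one-letter word $0$, and set inductively $P_{n+1}:=g^2(P_n)\cdot 0$. Each $P_n$ is a palindrome by the previous paragraph. To see that $P_n$ is a prefix of $\mathbf{x}$, observe that $g^2(\mathbf{x})=\mathbf{x}$ makes $g^2(P_n)$ a prefix of $\mathbf{x}$, and the following character of $\mathbf{x}$ is the first letter of $g^2(\mathbf{x}[|P_n|])$, which is $0$ since all three of $g^2(0),g^2(1),g^2(2)$ begin with $0$. Finally $|P_{n+1}|\ge 2|P_n|+1$ (because $|g^2(a)|\ge 2$ for each letter $a$), so $|P_n|\to\infty$ as required.

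The only non-routine step is noticing the palindromic identity $g^2(a)\cdot 0 = 0\cdot\widetilde{g^2(a)}$ in the first place; once it is available, the rest is mechanical induction. This line of reasoning has the pleasant feature of avoiding any appeal to the binary Thue--Morse word mentioned just before the lemma.
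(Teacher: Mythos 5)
Your proof is correct: the letter identity $g^2(a)\,0=0\,\widetilde{g^2(a)}$ does hold for all three letters, the induction giving $g^2(w)\,0=0\,\widetilde{g^2(\widetilde{w})}$ goes through by peeling off one block at a time, and since every $g^2(b)$ starts with $0$ while $\mathbf{x}$ is fixed by $g^2$, the words $P_{n+1}=g^2(P_n)\,0$ are indeed palindromic prefixes of $\mathbf{x}$ of unbounded length, which immediately forces the factor set to be closed under mirror image. This is a genuinely different route from the paper's, even though both arguments hinge on the very same conjugacy relation $g^2(a)=0\,\widetilde{g}^2(a)\,0^{-1}$: the paper introduces the reversed morphism $\widetilde{g}$, takes the auxiliary fixed point $\mathbf{y}=(\widetilde{g}^2)^\omega(1)$, and argues in two steps that $\mathbf{x}$ and $\mathbf{y}$ have the same factors (via the conjugacy) and that the factors of $\mathbf{y}$ are exactly the mirrors of those of $\mathbf{x}$ (via $r=g^2(s)\Leftrightarrow\widetilde{r}=\widetilde{g}^2(\widetilde{s})$). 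Your version dispenses with the auxiliary word entirely and instead extracts a stronger structural fact about $\mathbf{x}$ itself, namely that it has infinitely many palindromic prefixes; closure under reversal is then a one-line consequence. What the paper's formulation buys in exchange is the explicit mirror correspondence $u\in\Fac(\mathbf{x})\Leftrightarrow\widetilde{u}\in\Fac(\mathbf{y})$ together with $\Fac(\mathbf{x})=\Fac(\mathbf{y})$, but for the purposes of Lemma~\ref{lem:mirror} and its later use in Lemma~\ref{lem2}, your palindromic-prefix argument is fully sufficient and arguably cleaner, since it avoids the slightly delicate step of comparing two different fixed points.
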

\begin{proof}
We define the morphism $\widetilde{g}:A^*\to A^*$ by considering the mirror images of the images of the letters by $g$,
$$\widetilde{g}:\left\{
\begin{array}{lll}
0&\mapsto&210\\
1&\mapsto&20\\
2&\mapsto&1\\
\end{array}\right.
\text{ and thus, }
\widetilde{g}^2:\left\{
\begin{array}{lll}
0&\mapsto&120210\\
1&\mapsto&1210\\
2&\mapsto&20.\\
\end{array}\right.
$$
Note that $\widetilde{g}$ is not prolongable on any letter. But 
the morphism $\widetilde{g}^2$ is prolongable on the letter $1$. We consider the infinite word
$$\mathbf{y}=(\widetilde{g}^2)^\omega(1)=1210 20 1210 120210 20 120210
1210\cdots.$$
%
If $v\in A^*$ is a non-empty word ending with $a\in A$, {\em i.e.}, $v=ua$ for some word $u\in A^*$, we denote by $va^{-1}$ the word obtained by removing the suffix $a$ from $v$. So $va^{-1}=u$. 

For every words $r$ and $s$ 
we have $r=g^2(s)\Leftrightarrow \widetilde{r}=\widetilde{g}^2(\widetilde{s})$. 
Obviously, $u$ is a factor occurring in $\mathbf{x}$ if and only if $\widetilde{u}$ is a factor occurring in $\mathbf{y}$.  

On the other hand, $\widetilde{g}^2$ is a cyclic shift of $g^2$, since $g^2(a)= 0 \widetilde{g}^2(a) 0^{-1}$ for every $a\in \{0,1,2\}$. Thus $u$ is a factor occurring in $\mathbf{x}$ if and only if $u$ is a factor occurring in $\mathbf{y}$. 
To summarize, $u$ is a factor occurring in $\mathbf{x}$ if and only if $u$ is a factor occurring in $\mathbf{y}$, and $u$ is a factor occurring in $\mathbf{y}$ if and only if $\widetilde{u}$ is a factor occurring in $\mathbf{x}$. This concludes the proof.
\end{proof}


We will be dealing with $2$-binomial squares so, in particular, with abelian squares. The next lemma 
permit to ``desubstitute'', meaning that we are looking for the inverse image of a factor under the considered morphism.

\begin{lemma}\label{lem2}
  Let $u,v\in A^*$ be two abelian equivalent non-empty words such that $uv$ is a factor occurring in $\mathbf{x}$. There exists $u',v'\in A^*$ such that $u'v'$ is a factor of $\mathbf{x}$, and either:
  \begin{enumerate}
    \item $u=g(u')$ and $v=g(v')$;
    \item or, $\widetilde{u}=g(v')$ and $\widetilde{v}=g(u')$.
  \end{enumerate}
\end{lemma}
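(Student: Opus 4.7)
The plan is to leverage Observation \ref{obs:02} to constrain the ``non-1 projections'' $e(u)$ and $e(v)$, and then to split according to whether this projection is $(02)^m$ or $(20)^m$. The first alternative will yield conclusion (1) directly by showing that $u$ and $v$ each align with the block boundaries of the prefix code $X = \{012,02,1\}$; the second will be reduced to the first via Lemma \ref{lem:mirror}.

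First I would set $m = |u|_0$ and argue that $|u|_0 = |v|_0 = |u|_2 = |v|_2 = m$ with $m \ge 1$. Indeed, $u \sim_{\mathsf{ab}} v$ gives $|u|_a = |v|_a$ for every letter $a$; Observation \ref{obs:02} says that $e(uv)$ is a factor of $(02)^\omega$, whose $0$-count and $2$-count differ by at most $1$, which forces $2|u|_0 = 2|u|_2$; and $m = 0$ is ruled out since it would force $u = v = 1$ and hence $uv = 11$, which is not a factor of $\mathbf{x}$. Consequently $e(u)$ has $m$ zeros and $m$ twos while being a factor of $(02)^\omega$, so $e(u) \in \{(02)^m,(20)^m\}$, and alternation inside $e(uv)$ forces $e(v)$ to be of the same type as $e(u)$.

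In case (A), where $e(u) = e(v) = (02)^m$, the first non-1 letter of $u$ is a $0$. If $u[0]=0$ then this $0$ sits at offset $0$ of a block $012$ or $02$; if $u[0]=1$ then, since $11$ is not a factor of $\mathbf{x}$ and the first non-1 letter of $u$ is $0$, we must have $u[1]=0$, which excludes the $1$ at position $1$ inside a block $012$ (whose successor is $2$) and leaves only the case of a whole block $1$ followed by a block starting with $0$. In either sub-case $u$ begins at a block boundary. A symmetric analysis at the end of $u$ (the last non-1 letter is $2$, which always closes its block, and any trailing final $1$ must be a whole block $1$ preceded by a block ending in $2$) shows that $u$ also ends at a block boundary, and the same reasoning applies to $v$. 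By uniqueness of the $X$-factorization, $u$ and $v$ are then concatenations of blocks of $X$, so $u = g(u')$, $v = g(v')$, and $u'v'$ is the corresponding factor of $\mathbf{x}$ obtained by inverting the block decomposition via $\mathbf{x} = g(\mathbf{x})$, yielding (1).

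In case (B), where $e(u) = e(v) = (20)^m$, I would apply Lemma \ref{lem:mirror}: $\widetilde{uv} = \widetilde{v}\widetilde{u}$ is a factor of $\mathbf{x}$, $\widetilde{v} \sim_{\mathsf{ab}} \widetilde{u}$, and $e(\widetilde{v}) = \widetilde{e(v)} = (02)^m$. Applying case (A) to the pair $(\widetilde{v},\widetilde{u})$ then produces $u',v' \in A^*$ with $\widetilde{v} = g(u')$, $\widetilde{u} = g(v')$, and $u'v'$ a factor of $\mathbf{x}$, which is precisely (2). I expect the main technical obstacle to be the sub-case analysis in (A) verifying that both endpoints of $u$ really fall at block boundaries --- this is where the prefix-code structure of $X$ and the absence of $11$ in $\mathbf{x}$ together do the essential work.
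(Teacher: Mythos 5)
Your proposal is correct and follows essentially the same route as the paper: project away the $1$'s via Observation~\ref{obs:02} to get $e(u)=e(v)\in\{(02)^m,(20)^m\}$, show in the $(02)^m$ case that $u$ and $v$ align with the prefix-code factorization $X=\{012,02,1\}$ of $\mathbf{x}$ (yielding conclusion (1)), and reduce the $(20)^m$ case to the first one by Lemma~\ref{lem:mirror} (yielding conclusion (2)). Your block-boundary analysis simply spells out in more detail what the paper asserts when it deduces $u,v\in X^*$.
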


\begin{proof}
We will make an extensive use of Observation~\ref{obs:02}. 
Note that $u$ and $v$ must contain at least one $0$ or one $2$.
Obviously $e(uv)$ is an abelian square of $(02)^{\omega}$, thus either $e(u)=e(v)=(02)^i$ or $e(u)=e(v)=(20)^i$ for an $i>0$.

If $e(u)=e(v)=(02)^i$, then we have $u=a \, 0\, \cdots \, 2\, b$  and $v=c \, 0\, \cdots \, 2\, d$ with $a,bc,d\in\{\varepsilon,1\}$. In this case, we deduce that $u$ and $v$ belongs to $X^*$. Otherwise stated, since $uv$ is a factor of $\mathbf{x}$,  there exists a factor $u'v'$ in $\mathbf{x}$  such that $g(u')=u$ and $g(v')=v$.

Otherwise we have $e(u)=e(v)=(20)^i$. Thanks to Lemma~\ref{lem:mirror}, $\widetilde{v}\widetilde{u}$ is a factor occurring in $\mathbf{x}$, and $e(\widetilde{u})=e(\widetilde{v})=(02)^i$. Thus we are reduced to the previous case, and there is a factor $u',v'$ in $\mathbf{x}$  such that $g(u')=\widetilde{v}$ and $g(v')=\widetilde{u}$.
\end{proof}

Let $u$ be a word. We set
$$\lambda_u:=\binom{u}{01}-\binom{u}{12}.$$ When we use the desubstitution provided by the previous lemma, the shorter factors $u'$ and $v'$ derived from $u$ and $v$ keep properties from their ancestors.

\begin{lemma}
Let $u,v\in A^*$ be two abelian equivalent non-empty words such that $uv$ is a factor occurring in $\mathbf{x}$. Let $u',v'$ be given by Lemma~\ref{lem2}.
If $\lambda_u=\lambda_v$, then $u'$ and $v'$ are abelian equivalent and $\lambda_{u'}=\lambda_{v'}$.
\end{lemma}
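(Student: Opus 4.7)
The plan is to handle Case~1 of Lemma~\ref{lem2} (where $u=g(u')$ and $v=g(v')$) in detail, then reduce Case~2 to Case~1 via Lemma~\ref{lem:mirror}. For the reduction I will use the identities $\binom{\widetilde{w}}{ab}=\binom{w}{ba}$ and $\binom{w}{ab}+\binom{w}{ba}=|w|_a|w|_b$ together with $u\sim_{\mathsf{ab}}v$ to show that $\lambda_u=\lambda_v$ implies $\lambda_{\widetilde{u}}=\lambda_{\widetilde{v}}$, and then apply Case~1 to the pair $(\widetilde{v},\widetilde{u})=(g(u'),g(v'))$, whose concatenation $\widetilde{v}\widetilde{u}=\widetilde{uv}$ is a factor of $\mathbf{x}$ by Lemma~\ref{lem:mirror}.

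In Case~1, my first observation is that $u\sim_{\mathsf{ab}}v$, read through the Parikh vectors of $g(0),g(1),g(2)$, is equivalent to the two equations $|u'|_0+|u'|_1=|v'|_0+|v'|_1$ and $|u'|_0+|u'|_2=|v'|_0+|v'|_2$. Hence $\Psi(u')-\Psi(v')=d\,(1,-1,-1)^T$ for some integer $d$, and the central step is to prove $d=0$. I compute $\lambda_{g(w)}$ by expanding $\binom{g(w)}{01}$ and $\binom{g(w)}{12}$ block-wise via $\binom{xy}{ab}=\binom{x}{ab}+\binom{y}{ab}+|x|_a|y|_b$. Because $\binom{g(a)}{01}=\binom{g(a)}{12}$ for every letter $a$, the within-block contributions cancel, leaving an expression that is quadratic in $\Psi(w)$ plus a linear combination of $\binom{w}{01}$, $\binom{w}{20}$ and $\binom{w}{21}$. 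Substituting the parametrization above into $\lambda_u=\lambda_v$, the product-difference terms collapse to $-2d|v'|_0-d^2$, from which it follows that $d^2$ is even and therefore $d$ is even.

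To finish pinning down $d$, I invoke Observation~\ref{obs:02}: since $u'$ and $v'$ are factors of $\mathbf{x}$, their $e$-images are factors of $(02)^\omega$, so each of $e(u'),e(v')$ has one of the four shapes $(02)^k$, $2(02)^k$, $(02)^k 0$, $2(02)^k 0$. In particular $|u'|_0-|u'|_2$ and $|v'|_0-|v'|_2$ both belong to $\{-1,0,1\}$, and their difference equals $2d$, so $d\in\{-1,0,1\}$. Combined with the parity constraint, this forces $d=0$ and hence $u'\sim_{\mathsf{ab}}v'$.

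With $d=0$, the equation $\lambda_u=\lambda_v$ reduces to $\delta_{01}+\delta_{20}+\delta_{21}=0$, where $\delta_{ab}:=\binom{u'}{ab}-\binom{v'}{ab}$. Since $u'\sim_{\mathsf{ab}}v'$ implies $\delta_{ab}+\delta_{ba}=0$, this rewrites as $\delta_{01}-\delta_{12}=\delta_{02}$, so the desired $\lambda_{u'}=\lambda_{v'}$ amounts to $\delta_{02}=0$. I expect this last piece to be the main obstacle, and I will handle it by using that $e(u'v')=e(u')e(v')$ is a factor of $(02)^\omega$: when $|u'|_0\neq|u'|_2$ the shape of $e(u')$ is uniquely determined by the Parikh vector, so $\binom{u'}{02}=\binom{v'}{02}$; when $|u'|_0=|u'|_2$ the two candidate shapes are $(02)^k$ and $2(02)^{k-1}0$, and a different shape for $u'$ than for $v'$ would make $e(u')e(v')$ contain $00$ or $22$, contradicting the factor-of-$(02)^\omega$ property. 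In either case $\delta_{02}=0$, completing the proof.
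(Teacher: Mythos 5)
Your proof is correct and follows essentially the same route as the paper: reduce the mirror case to the first case, desubstitute $\lambda$ through $g$, combine the abelian constraints with Observation~\ref{obs:02} and a parity argument to force $\Psi(u')=\Psi(v')$, and reduce $\lambda_{u'}=\lambda_{v'}$ to $\binom{u'}{02}=\binom{v'}{02}$, settled by the alternation of $0$'s and $2$'s in $e(u')e(v')$. The differences are cosmetic: you run the parity argument on the products $|w|_a|w|_b$ via the parameter $d$ (the paper instead lists the three possible Parikh differences and rules out $\pm(1,-1,-1)^T$ using parities of $\binom{m}{2}$), and you spell out the final step $\binom{u'}{02}=\binom{v'}{02}$ which the paper states only tersely.
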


\begin{proof}
If we are in the second situation described by Lemma~\ref{lem2}, then $\widetilde{v}\widetilde{u}$ is also a factor occurring in $\mathbf{x}$. Obviously $\widetilde{v}$ and $\widetilde{u}$ are also abelian equivalent, $\lambda_{\widetilde{v}}=\lambda_{\widetilde{u}}$ and the case is reduced to the first situation.

Assume now w.l.o.g. that we are in the first situation, 
that is $u=g(u')$ and $v=g(v')$.
First observe that we have, for all $a,b\in A$, $a\neq b$,
\begin{equation}
    \label{eq:delta}
\binom{u'}{ab}=\binom{|u'|_a+|u'|_b}{2}-\binom{|u'|_a}{2}-\binom{|u'|_b}{2}-\binom{u'}{ba}.    
\end{equation}
Since $u=g(u')$, we derive that
$$\binom{u}{01}=|u'|_0+\binom{u'}{00}+\binom{u'}{02}+\binom{u'}{12}+\binom{|u'|_0+|u'|_1}{2}-\binom{|u'|_0}{2}-\binom{|u'|_1}{2}-\binom{u'}{01},
$$
\begin{eqnarray*}
\binom{u}{12}&=&|u'|_0+\binom{u'}{00}+\binom{u'}{01}+\binom{|u'|_1+|u'|_2}{2}-\binom{|u'|_1}{2}-\binom{|u'|_2}{2}-\binom{u'}{12}\\
&&+\binom{|u'|_0+|u'|_2}{2}-\binom{|u'|_0}{2}-\binom{|u'|_2}{2}-\binom{u'}{02}.
\end{eqnarray*}
Hence
$$\lambda_{u}=
2\biggl[\binom{u'}{02}-\binom{u'}{01}+\binom{u'}{12}-\binom{|u'|_2}{2}\biggr]+\binom{|u'|_0+|u'|_1}{2}-\binom{|u'|_1+|u'|_2}{2}-\binom{|u'|_0+|u'|_2}{2}.
$$
Similar relations holds for $v$.

Since $u'$ and $v'$ occur in $\mathbf{x}$, from Observation~\ref{obs:02}, we get
\begin{equation}
    \label{eq:star}
    \left| |u'|_0-|u'|_2 \right|\le 1 \text{ and }
\left| |v'|_0-|v'|_2 \right|\le 1.
\end{equation}
Since $u\sim_{\mathsf{ab}}v$, we have $|u|_1=|v|_1$. Hence, from the definition of $g$, $|u'|_0+|u'|_2=|v'|_0+|v'|_2$. In the same way, $|u|_2=|v|_2$ implies that $|u'|_0+|u'|_1=|v'|_0+|v'|_1$ or equivalently, 
$|u'|_1-|v'|_1=|v'|_0-|u'|_0$. From the above relation and \eqref{eq:star}, we get
$$\left| |v'|_0-|u'|_0 + |u'|_2-|v'|_2 \right|\le 2 \text{ and }
|u'|_2-|v'|_2=|v'|_0-|u'|_0.$$
Hence the difference of the following two Parikh vectors can only take three values
$$\Psi(u')-\Psi(v')\in\left\{
\begin{pmatrix}
    0\\0\\0\\
\end{pmatrix},  
\begin{pmatrix}
    1\\-1\\-1\\
\end{pmatrix}, 
\begin{pmatrix}
    -1\\1\\1\\
\end{pmatrix}\right\}.$$
To prove that $u'$ and $v'$ are abelian equivalent, we will rule out the last two possibilities.

By assumption, $\lambda_u=\lambda_v$. So this relation also holds modulo $2$. Hence
\begin{eqnarray*}
&&\binom{|u'|_0+|u'|_1}{2}-\binom{|u'|_1+|u'|_2}{2}-\binom{|u'|_0+|u'|_2}{2}\\
&\equiv& \binom{|v'|_0+|v'|_1}{2}-\binom{|v'|_1+|v'|_2}{2}-\binom{|v'|_0+|v'|_2}{2} \pmod{2}.
\end{eqnarray*}
Assume that we have 
$$\Psi(u')-\Psi(v')=
\begin{pmatrix}
    1\\-1\\-1\\
\end{pmatrix}, i.e., 
\begin{array}{rcl}
|u'|_0+|u'|_1&=&|v'|_0+|v'|_1,\\
|u'|_0+|u'|_2&=&|v'|_0+|v'|_2,\\
|u'|_1+|u'|_2&=&|v'|_1+|v'|_2-2.\\
\end{array}$$
This leads to a contradiction because then
$$\binom{|u'|_1+|u'|_2}{2}\not\equiv \binom{|v'|_1+|v'|_2}{2} \pmod{2}.$$
Indeed, it is easily seen that $\binom{4n}{2}\equiv 0\pmod{2}$, $\binom{4n+1}{2}\equiv 0\pmod{2}$, $\binom{4n+2}{2}\equiv 1\pmod{2}$ and $\binom{4n+3}{2}\equiv 1\pmod{2}$.

The case $\Psi(u')-\Psi(v')=
\left ( \begin{smallmatrix}
    -1\\1\\1\\
\end{smallmatrix}\right )$ is handled similarly.
So we can assume now that $\Psi(u')=\Psi(v')$, that is $u'\sim_{\mathsf{ab}} v'$. 
It remains to prove that $\lambda_{u'}=\lambda_{v'}$. 
By assumption $\lambda_u=\lambda_v$, 
and from the above formula describing $\lambda_u$ (resp. $\lambda_v$) we get
$$\binom{u'}{02}-\binom{u'}{01}+\binom{u'}{12}=\binom{v'}{02}-\binom{v'}{01}+\binom{v'}{12}.$$
To conclude that $\lambda_{u'}=\lambda_{v'}$, we should simply show that $\binom{u'}{02}=\binom{v'}{02}$. But $u'v'$ is a factor occurring in $\mathbf{x}$ (from Observation~\ref{obs:02}, when discarding the $1$'s with just alternate $0$'s and $2$'s) and $u'\sim_{\mathsf{ab}}v'$. This concludes the proof.
%
\end{proof}

\begin{theorem}
    The word $\mathbf{x}=g^\omega(0)=012021012102012021020121\cdots$ avoids $2$-binomial squares.
\end{theorem}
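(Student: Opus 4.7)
The plan is to argue by contradiction, using the two preceding lemmas to drive a length-decreasing descent that collides with the classical squarefreeness of $\mathbf{x}$.

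First, I would isolate the right invariant. Define $T$ to be the set of pairs $(u,v)$ of non-empty words such that $uv$ is a factor of $\mathbf{x}$, $u\sim_{\mathsf{ab}} v$, and $\lambda_u=\lambda_v$. Any $2$-binomial square $uv$ in $\mathbf{x}$ automatically belongs to $T$: $2$-binomial equivalence implies abelian equivalence, and $\lambda_u=\lambda_v$ follows at once from $\binom{u}{01}=\binom{v}{01}$ and $\binom{u}{12}=\binom{v}{12}$. It therefore suffices to prove that $T=\emptyset$. The point of this reformulation is that the invariant ``abelian equivalence together with equal $\lambda$'' is precisely what the previous two lemmas are designed to transport through the desubstitution by $g$; full $2$-binomial equivalence is strictly stronger and is not claimed to descend.

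Next, I would suppose for contradiction that $T\neq\emptyset$ and pick $(u,v)\in T$ minimizing $|u|=|v|$. If $|u|=1$, then $u=v$ is a single letter and $uv$ is an ordinary square; this contradicts the well-known squarefreeness of the ternary Thue--Morse word, so $|u|\geq 2$. Applying Lemma~\ref{lem2} yields $(u',v')$ with $u'v'$ a factor of $\mathbf{x}$ and either $(u,v)=(g(u'),g(v'))$ or $(\widetilde{u},\widetilde{v})=(g(v'),g(u'))$; both $u'$ and $v'$ are non-empty since $g(\varepsilon)=\varepsilon$. The lemma immediately following then delivers $u'\sim_{\mathsf{ab}} v'$ and $\lambda_{u'}=\lambda_{v'}$, so $(u',v')\in T$.

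The only remaining point is the strict inequality $|u'|<|u|$. Counting letters gives $|u|-|u'|=2|u'|_0+|u'|_1\geq 0$, with equality only when $u'\in\{2\}^*$. But $\mathbf{x}$ is squarefree, so any factor of the form $2^k$ in $\mathbf{x}$ satisfies $k\leq 1$; hence $|u'|=|u|$ would force $|u|\leq 1$, contradicting $|u|\geq 2$. Thus $|u'|<|u|$, violating the minimality of $(u,v)$, and so $T$ must be empty. I do not foresee any serious obstacle: once the correct descent invariant is identified, the argument is a clean minimal-counterexample descent whose real content is already packaged in the two preceding lemmas. The only items requiring care are handling both branches of Lemma~\ref{lem2} uniformly (the mirror branch is absorbed by the fact that abelian equivalence and $\lambda$-equality are mirror-invariant) and the final length bookkeeping via squarefreeness.
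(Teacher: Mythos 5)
Your proof is correct and follows essentially the same route as the paper: reduce a $2$-binomial square to the invariant ``abelian equivalent with $\lambda_u=\lambda_v$'', then iterate Lemma~\ref{lem2} together with the following lemma to descend and collide with the ordinary squarefreeness of $\mathbf{x}$. Your only addition is to make explicit the strict length decrease (ruling out $u'\in\{2\}^*$ via squarefreeness), a detail the paper's terse proof leaves implicit.
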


\begin{proof}
Assume to the contrary that $\mathbf{x}$ contains a $2$-binomial square $uv$ where $u$ and $v$ are $2$-binomially equivalent. In particular, $u$ and $v$ are abelian equivalent and moreover $\lambda_u=\lambda_v$. We can therefore apply iteratively Lemma~\ref{lem2} and the above lemma to words of decreasing lengths and get finally a repetition $aa$ with $a\in A$ in $\mathbf{x}$. But $\mathbf{x}$ does not contain any such factor.
\end{proof}

\begin{remark}
The fixed point of $g$ is $2$-binomial-square free, but $g$ is not $2$-binomial-square-free, that is the image of a $2$-binomial-square-free word may contain a $2$-binomial-square (\emph{e.g.}, $g(010)=01202012$ contains the square $2020$).
\end{remark}

\section{Avoiding $2$-binomial cubes over a $2$-letter alphabet}\label{sec3}

Consider the morphism $h:0\mapsto 001$ and $h:1\mapsto 011$. 
In this section, we show that $h$ is $2$-binomial-cube-free, that is for every $2$-binomial-cube free binary word $w$, $h(w)$ is $2$-binomial-cube-free.
As a direct corollary, we get that the fixed point of $h$, 
$$\mathbf{z}=h^\omega(0)
=001 001011 001001011001011011\cdots$$
avoids $2$-binomial cubes. 

Let $u$ be a word over $\{0,1\}$. The {\em extended Parikh vector} of $u$ is 
$$\Psi_2(u)=\biggl(|u|_0,|u|_1,\binom{u}{00},\binom{u}{01},\binom{u}{10},\binom{u}{11}\biggr)^T.$$
Observe that two words $u$ and $v$ are $2$-binomially equivalent if and only if $\Psi_2(u)=\Psi_2(v)$.

Consider the matrix $M_h$ given by
$$M_h=
\begin{pmatrix}
2&1&0&0&0&0\\
1&2&0&0&0&0\\
1&0&4&2&2&1\\
2&2&2&4&1&2\\
0&0&2&1&4&2\\
0&1&1&2&2&4\\
\end{pmatrix}.$$
One can check that $M_h$ is invertible. 
We will make use of the following observations:
\begin{proposition}\label{prop:mh}
For every $u\in\{0,1\}^*$,
$$\Psi_2(h(u))=M_h\Psi_2(u).$$ 
\end{proposition}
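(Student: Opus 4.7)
The plan is to verify the identity coordinate by coordinate. Writing $u=u_1u_2\cdots u_n$ with $u_i\in\{0,1\}$ and using that $h$ is a morphism, we have $h(u)=h(u_1)h(u_2)\cdots h(u_n)$. The first two coordinates of $\Psi_2(h(u))$ are immediate: $|h(u)|_a=\sum_i |h(u_i)|_a=|h(0)|_a\,|u|_0+|h(1)|_a\,|u|_1$ for $a\in\{0,1\}$, and plugging in the values $|h(0)|_0=2$, $|h(0)|_1=1$, $|h(1)|_0=1$, $|h(1)|_1=2$ reproduces exactly the top-left $2\times 2$ block of $M_h$.

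For the four remaining coordinates I would invoke the classical concatenation identity
$$\binom{xy}{ab}=\binom{x}{ab}+\binom{y}{ab}+|x|_a\cdot|y|_b$$
(see e.g. \cite[Chap.~6]{Lot}), which iterates to
$$\binom{v_1v_2\cdots v_n}{ab}=\sum_{i=1}^n\binom{v_i}{ab}+\sum_{1\le i<j\le n}|v_i|_a\cdot|v_j|_b$$
for arbitrary words $v_1,\dots,v_n$. Specializing to $v_i=h(u_i)$ and then grouping the first sum by the value of $u_i\in\{0,1\}$ and the second sum by the ordered pair $(u_i,u_j)\in\{0,1\}^2$ yields
$$\binom{h(u)}{ab}=\sum_{c\in\{0,1\}}\binom{h(c)}{ab}\cdot|u|_c+\sum_{c,d\in\{0,1\}}|h(c)|_a\cdot|h(d)|_b\cdot\binom{u}{cd}.$$
This expression is manifestly linear in the six components of $\Psi_2(u)$.

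What remains is pure book-keeping: read off the coefficients on the right-hand side and check that they agree row by row with the bottom four rows of $M_h$. The first sum populates the first two columns of these rows via the quantities $\binom{h(c)}{ab}$ (computed once and for all from $h(0)=001$ and $h(1)=011$), while the second sum populates the bottom-right $4\times 4$ block with the outer-product-type entries $|h(c)|_a\,|h(d)|_b$. For instance, the $(3,4)$-entry of $M_h$ must equal $|h(0)|_0\cdot|h(1)|_0=2\cdot 1=2$, in agreement with the stated matrix. I do not anticipate any real obstacle; the only potential pitfall is to respect the chosen ordering of the coordinates of $\Psi_2$ when performing the final tabulation.
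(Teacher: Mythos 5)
Your plan is correct: the concatenation identity $\binom{xy}{ab}=\binom{x}{ab}+\binom{y}{ab}+|x|_a|y|_b$, iterated over $h(u)=h(u_1)\cdots h(u_n)$ and grouped by letters and ordered pairs of letters of $u$, yields exactly the linear expression whose coefficients are the rows of $M_h$, and the sample entries you computed agree with the stated matrix. The paper leaves this routine verification to the reader, and your argument is precisely the standard one it implicitly relies on.
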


\begin{proposition}\label{prop:cyc}
Let $u=1x$ and $u'=x1$ be two words over $\{0,1\}$. We have $|u|_0=|u'|_0$, $|u|_1=|u'|_1$, 
$$\binom{u}{00}= \binom{u'}{00},\ \binom{u}{11}= \binom{u'}{11},\  \binom{u'}{01}= \binom{u}{01}+|u|_0,\   \binom{u'}{10}= \binom{u}{10}-|u|_0.$$  
In particular, if $1x\sim_2 1y$, then $x1\sim_2 y1$. Similar relations hold for $0x$ and $x0$. In particular, if $x0\sim_2 y0$, then $0x\sim_2 0y$.
\end{proposition}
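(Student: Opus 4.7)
The proof is a direct combinatorial calculation. My plan is to set $u = 1x$, $u' = x1$, and to decompose each of the six quantities appearing in $\Psi_2(\cdot)$ as a contribution from pairs of indices lying entirely in $x$ (which gives the corresponding value on $x$) plus a correction from the pairs involving the extra $1$---at the front in $u$, at the back in $u'$.

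The first four equalities will follow because the extra $1$ contributes symmetrically in both positions: it is not a $0$, so it contributes $0$ to $\binom{\cdot}{00}$ in either case; and it pairs with every $1$ of $x$ regardless of whether it sits at the front or the back of $x$, so it contributes the same amount $|x|_1$ to $\binom{\cdot}{11}$ in both cases. The letter counts $|u|_0, |u|_1$ are obviously preserved by the cyclic shift. The two remaining identities are where the shift has bite: in $u = 1x$, the front $1$ precedes every $0$ of $x$, adding $|x|_0 = |u|_0$ to $\binom{u}{10}$ and $0$ to $\binom{u}{01}$; in $u' = x1$, the final $1$ is preceded by every $0$ of $x$, adding $|u|_0$ to $\binom{u'}{01}$ and $0$ to $\binom{u'}{10}$. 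Subtracting the two decompositions gives exactly the stated identities $\binom{u'}{01} = \binom{u}{01} + |u|_0$ and $\binom{u'}{10} = \binom{u}{10} - |u|_0$.

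The implication $1x \sim_2 1y \Rightarrow x1 \sim_2 y1$ is then immediate at the level of $\Psi_2$: the hypothesis guarantees $|1x|_0 = |1y|_0$, so the same corrections are added to and subtracted from each side, preserving the componentwise equality. The symmetric statement for $0x$ and $x0$ is handled by the entirely parallel argument obtained by swapping the roles of $0$ and $1$: the correction term becomes $|u|_1$, and it affects $\binom{\cdot}{10}$ and $\binom{\cdot}{01}$ with the opposite signs (the front $0$ in $0x$ adds $|x|_1$ to $\binom{\cdot}{01}$, whereas the back $0$ in $x0$ adds $|x|_1$ to $\binom{\cdot}{10}$).

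There is no genuine obstacle here---only careful bookkeeping of which pairs of indices are created or destroyed by moving a single letter from one end of the word to the other. The conceptual content is just that the discrepancy between $u$ and $u'$ in the $01$ and $10$ counts depends only on the number of opposite letters in $x$, while the other four entries of the extended Parikh vector are invariant under this cyclic shift by a single letter.
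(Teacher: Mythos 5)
Your proof is correct, and it is exactly the elementary bookkeeping the paper relies on: Proposition~\ref{prop:cyc} is stated there without proof as an observation, and your decomposition into pairs inside $x$ plus the corrections from the single moved letter (including the sign-reversed corrections in the $0x$/$x0$ case) is the intended verification. Nothing is missing.
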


Let $x,y\in\{0,1\}$. We set $\delta_{x,y}=1$, if $x=y$; and $\delta_{x,y}=0$, otherwise.

\begin{lemma}\label{lem:cubeone}
Let $p'$, $q'$ and $r'$ be binary words, and let $a,b\in \{0,1\}$. 
Let $p=h(p')\,0$, $q= a\, 1\, h(q')\, 0\, b$ and $r=1\, h(r')$.
Then either $p \not \sim_2 q$ or $p \not\sim_2 r$.
\end{lemma}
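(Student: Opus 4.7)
The plan is to exhibit a single integer invariant on binary words that, when reduced modulo $3$, distinguishes the three words. For every binary word $u$, set $f(u):=\binom{u}{01}-\binom{u}{10}$; since $f$ depends only on $\Psi_2(u)$, it is preserved by $\sim_2$. Assuming for contradiction that both $p\sim_2 q$ and $p\sim_2 r$ hold, I would compute $f(p),f(q),f(r)$ modulo $3$ and derive the impossible congruence $1\equiv 0 \pmod{3}$.

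First, I extract the length and Parikh constraints. Since $|p|=3|p'|+1$, $|q|=3|q'|+3$ and $|r|=3|r'|+1$, the equalities $|p|=|q|=|r|$ yield $|p'|=|r'|=|q'|+1$. The equalities $|p|_0=|r|_0$ and $|p|_1=|r|_1$ form a $2\times 2$ linear system in $(r'_0,r'_1)$ whose coefficient matrix is the upper-left block of $M_h$; this matrix has determinant $3$, and solving the system gives $r'_0=p'_0+1$ and $r'_1=p'_1-1$.

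Next, from Proposition~\ref{prop:mh} one reads that rows $4$ and $5$ of $M_h$ differ modulo $3$ only in the first two columns, where the difference is $(2,2)$. Hence for every binary $w$, $\binom{h(w)}{01}-\binom{h(w)}{10}\equiv 2|w|\pmod{3}$. Using the standard concatenation identity $\binom{uv}{xy}=\binom{u}{xy}+\binom{v}{xy}+|u|_x|v|_y$ together with the explicit forms $p=h(p')\,0$, $r=1\,h(r')$ and $q=a\,1\,h(q')\,0\,b$, a direct calculation yields
$$f(p)\equiv p'_0\pmod{3},\qquad f(r)\equiv r'_1\pmod{3},\qquad f(q)\equiv -|q'|-1\pmod{3}.$$
The subtle point is the formula for $f(q)$: one must verify that, in each of the four cases $(a,b)\in\{0,1\}^2$, the boundary cross terms involving $a$ and $b$ contribute a multiple of $3$, so that $f(q)\bmod 3$ is independent of $a$ and $b$. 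This is the main obstacle of the proof; it reduces to a brief symbolic expansion followed by four short arithmetic checks.

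Finally, $f(p)\equiv f(r)\pmod{3}$ combined with $r'_1=p'_1-1$ forces $p'_0\equiv p'_1-1\pmod{3}$, while $f(p)\equiv f(q)\pmod{3}$ combined with $|q'|=|p'|-1$ forces $p'_0\equiv -|p'|\pmod{3}$, equivalently $2p'_0+p'_1\equiv 0\pmod{3}$. Substituting the first congruence into the second yields $3p'_0+1\equiv 0\pmod{3}$, i.e.\ $1\equiv 0\pmod{3}$, the required contradiction.
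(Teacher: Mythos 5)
Your proposal is correct and takes essentially the same route as the paper: the paper also works with the quantity $\binom{u}{01}-\binom{u}{10}$ reduced modulo $3$, derives the same congruences (in its notation, $\binom{p'}{0}\equiv\binom{r'}{1}\equiv 2\bigl[\binom{q'}{0}+\binom{q'}{1}+1\bigr]\pmod 3$, with the $q$-formula independent of $a,b$), and combines them with the same Parikh computation giving $|r'|_0=|p'|_0+1$, $|r'|_1=|p'|_1-1$ to reach a contradiction; I checked that your asserted congruence $f(q)\equiv -|q'|-1\pmod 3$ does hold in all four cases of $(a,b)$. One small slip to fix: $|q|=3|q'|+4$, not $3|q'|+3$, and it is this count that yields $|p'|=|r'|=|q'|+1$.
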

\begin{proof}
Assume, for the sake of contradiction, that $p\sim_2 q\sim_2 r$. 
Then $\vert p'\vert =\vert q'\vert +1 = \vert r'\vert = n$.
The following relations can mostly be derived from the coefficients of $M_h$ (we also have to take into account the extra suffix $0$ of $p$, respectively the extra prefix $1$ in $r$): 
$$\binom{p}{01}=2\binom{p'}{0}+2\binom{p'}{1}+2\binom{p'}{00}+4\binom{p'}{01}+\binom{p'}{10}+2\binom{p'}{11},$$
$$\binom{p}{10}=\binom{p'}{0}+2\binom{p'}{1}+2\binom{p'}{00}+\binom{p'}{01}+4\binom{p'}{10}+2\binom{p'}{11},$$
$$\Rightarrow \binom{p}{01}-\binom{p}{10}=\binom{p'}{0}+3\binom{p'}{01}-3\binom{p'}{10};$$
$$\binom{r}{01}=2\binom{r'}{0}+2\binom{r'}{1}+2\binom{r'}{00}+4\binom{r'}{01}+\binom{r'}{10}+2\binom{r'}{11},$$
$$\binom{r}{10}=2\binom{r'}{0}+\binom{r'}{1}+2\binom{r'}{00}+\binom{r'}{01}+4\binom{r'}{10}+2\binom{r'}{11},$$
$$\Rightarrow \binom{r}{01}-\binom{r}{10}=\binom{r'}{1}+3\binom{r'}{01}-3\binom{r'}{10}.$$

We also get the following relations:
\begin{eqnarray*}
\binom{q}{01}&=&2\binom{q'}{0}+2\binom{q'}{1}+2\binom{q'}{00}+4\binom{q'}{01}+\binom{q'}{10}+2\binom{q'}{11}\\
&&+\delta_{a,0}\biggl[ 1+\binom{q'}{0}+2\binom{q'}{1}+\delta_{b,1}\biggr]+\delta_{b,1}\biggl[ 1+2\binom{q'}{0}+\binom{q'}{1}\biggr],
\end{eqnarray*}
\begin{eqnarray*}
\binom{q}{10}&=&3\binom{q'}{0}+3\binom{q'}{1}+2\binom{q'}{00}+\binom{q'}{01}+4\binom{q'}{10}+2\binom{q'}{11}+1\\
&&+\delta_{a,1}\biggl[1+\delta_{b,0}+2\binom{q'}{0}+\binom{q'}{1}\biggr]+
\delta_{b,0}\biggl[1+\binom{q'}{0}+2\binom{q'}{1}\biggr]\\
&=&(6-2\delta_{a,0}-\delta_{b,1})\binom{q'}{0}+(6-\delta_{a,0}-2\delta_{b,1})\binom{q'}{1}+4-2\delta_{a,0}-2\delta_{b,1}+\delta_{a,0}\delta_{b,1}\\
&&+2\binom{q'}{00}+\binom{q'}{01}+4\binom{q'}{10}+2\binom{q'}{11}.
\end{eqnarray*}
Where for the last equality, we have used the fact that $\delta_{a,1}=1-\delta_{a,0}$ and $\delta_{b,0}=1-\delta_{b,1}$.
Finally, we obtain
$$\binom{q}{01}-\binom{q}{10}=
(-4+3\delta_{a,0}+3\delta_{b,1}) \biggl[ \binom{q'}{0}+\binom{q'}{1}\biggr] +3\binom{q'}{01}-3\binom{q'}{10}
-4+3\delta_{a,0}+3\delta_{b,1}.
$$

Since $p\sim_2 q\sim_2 r$, we have $\binom{p}{10}-\binom{p}{01}=\binom{q}{10}-\binom{q}{01}=\binom{r}{10}-\binom{r}{01}$. In particular, these equalities modulo $3$ give
\begin{equation}
    \label{eq:mod3}
\binom{p'}{0}\equiv \binom{r'}{1}\equiv 2\biggr[\binom{q'}{0}+\binom{q'}{1}+1\biggl]\equiv 2n \pmod{3}.    
\end{equation}
Now, we take into account the fact that $p$ and $r$ are abelian equivalent to get a contradiction. Since $p=h(p')\, 0$ and $r=1\, h(r')$, we get
$$
\begin{pmatrix}
    |p|_0\\ |p|_1\\
\end{pmatrix}=
\begin{pmatrix}
    2&1\\ 1&2\\
\end{pmatrix}
\begin{pmatrix}
    |p'|_0\\ |p'|_1\\
\end{pmatrix}+
\begin{pmatrix}
    1\\0\\
\end{pmatrix},\ 
\begin{pmatrix}
    |r|_0\\ |r|_1\\
\end{pmatrix}=
\begin{pmatrix}
    2&1\\ 1&2\\
\end{pmatrix}
\begin{pmatrix}
    |r'|_0\\ |r'|_1\\
\end{pmatrix}+
\begin{pmatrix}
    0\\1\\
\end{pmatrix}.$$
Hence, we obtain
$$
\begin{pmatrix}
    |p|_0- |r|_0\\ |p|_1-|r|_1\\
\end{pmatrix}=
\begin{pmatrix}
    0\\0\\
\end{pmatrix}
=\begin{pmatrix}
    2&1\\ 1&2\\
\end{pmatrix}\begin{pmatrix}
    |p'|_0- |r'|_0\\ |p'|_1-|r'|_1\\
\end{pmatrix}+\begin{pmatrix} 1\\-1\\
\end{pmatrix}.
$$
We derive that $|p'|_0-|r'|_0=-1$ and $|p'|_1-|r'|_1=1$. Recalling that $|p'|_0+|p'|_1=n$. If we subtract the last two equalities, we get $|p'|_0+|r'|_1=n-1$. From \eqref{eq:mod3}, we know that $|p'|_0\equiv |r'|_1\pmod{3}$. Hence $2|p'|_0\equiv n-1\pmod{3}$ and thus
$$|p'|_0\equiv 2n-2\pmod{3}.$$
This contradicts the fact again given by \eqref{eq:mod3} that $|p'|_0\equiv 2n \pmod{3}$.
\end{proof}

Similarly, one get the following lemma.

\begin{lemma}\label{lem:cubetwo}
Let $p'$, $q'$ and $r'$ be binary words, and let $a,b\in \{0,1\}$. 
Let $p=h(p')\,0\, a$, $q=1\, h(q')\, 0$ and $r=b\, 1\, h(r')$.
Then either $p \not \sim_2 q$ or $p \not\sim_2 r$.
\end{lemma}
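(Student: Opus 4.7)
The plan is to follow exactly the strategy of the proof of Lemma~\ref{lem:cubeone}: assume $p\sim_2 q\sim_2 r$ for contradiction, extract mod-$3$ identities from the equalities $\binom{p}{01}-\binom{p}{10}=\binom{q}{01}-\binom{q}{10}=\binom{r}{01}-\binom{r}{10}$, combine them with the abelian equalities $p\sim_{\mathsf{ab}}q$ and $q\sim_{\mathsf{ab}}r$, and reach a numerical contradiction. Since $p\sim_2 q\sim_2 r$ forces $|p|=|q|=|r|$, we immediately get $|p'|=|q'|=|r'|=:n$, with no off-by-one as in Lemma~\ref{lem:cubeone}.

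The first step computes $\binom{p}{01}-\binom{p}{10}$, $\binom{q}{01}-\binom{q}{10}$ and $\binom{r}{01}-\binom{r}{10}$ modulo~$3$. Using rows $4$ and $5$ of $M_h$ for the $h$-blocks, plus a routine case analysis for $a,b\in\{0,1\}$ to tally the contributions of the two boundary letters attached to each $h$-block (pairs with one index in the $h$-block and one outside, and pairs with both indices outside), the $\pm 3\binom{\cdot}{01}$ and $\pm 3\binom{\cdot}{10}$ terms vanish and one obtains congruences of the shape
$$\binom{p}{01}-\binom{p}{10}\equiv|p'|_1+\delta_{a,1},\quad \binom{q}{01}-\binom{q}{10}\equiv -n-1,\quad \binom{r}{01}-\binom{r}{10}\equiv|r'|_0+\delta_{b,0}\pmod 3.$$

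In parallel, the abelian equalities $p\sim_{\mathsf{ab}}q$ and $q\sim_{\mathsf{ab}}r$ give, after inverting the $2{\times}2$ abelianisation matrix of $h$, the explicit formulas $|p'|_1=|q'|_1+1-\delta_{a,1}$ and $|r'|_0=|q'|_0+\delta_{b,1}$. Substituting them into the two mod-$3$ equalities induced by $\binom{p}{01}-\binom{p}{10}\equiv\binom{q}{01}-\binom{q}{10}\equiv\binom{r}{01}-\binom{r}{10}$, the dependence on $a,b$ collapses via $\delta_{x,0}+\delta_{x,1}=1$ and both reduce to $|q'|_0\equiv|q'|_1\equiv -n-2\pmod 3$. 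Summing and using $|q'|_0+|q'|_1=n$ yields $n\equiv -2n-4\pmod 3$, that is $0\equiv 2\pmod 3$, the required contradiction. The only genuine obstacle is the careful boundary bookkeeping for the extra letters, which is the same kind of case analysis already performed in the proof of Lemma~\ref{lem:cubeone}.
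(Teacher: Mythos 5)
Your proposal is correct and follows essentially the same route as the paper: the mod-3 congruences you state for $\binom{\cdot}{01}-\binom{\cdot}{10}$ are exactly the ones the paper derives, and combining them with the Parikh-vector constraints produces the same kind of arithmetic contradiction. The only cosmetic difference is that you exploit the abelian equalities $p\sim_{\mathsf{ab}}q$ and $q\sim_{\mathsf{ab}}r$ to express $|p'|_1$ and $|r'|_0$ through $q'$ and conclude with $0\equiv 2\pmod 3$, whereas the paper compares $p$ and $r$ directly and ends with two incompatible congruences for $|r'|_1$.
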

\begin{proof}
Assume, for the sake of contradiction, that $p\sim_2 q\sim_2 r$. 
Then $\vert p'\vert =\vert q'\vert = \vert r'\vert = n$.
Taking into account the special form of $p$ and $q$, we get
$$\binom{p}{01}=2\binom{p'}{0}+2\binom{p'}{1} +2\binom{p'}{00}+4\binom{p'}{01}+\binom{p'}{10}+2\binom{p'}{11}+\delta_{a,1}
\biggr(1+2\binom{p'}{0}+\binom{p'}{1}\biggr),$$
$$\binom{p}{10}=\binom{p'}{0}+2\binom{p'}{1}+2\binom{p'}{00}+\binom{p'}{01}+4\binom{p'}{10}+2\binom{p'}{11}+\delta_{a,0}
\biggr(\binom{p'}{0}+2\binom{p'}{1}\biggr),$$
$$\binom{q}{01}=2\binom{q'}{0}+2\binom{q'}{1} +2\binom{q'}{00}+4\binom{q'}{01}+\binom{q'}{10}+2\binom{q'}{11},$$
$$\binom{q}{10}=3\binom{q'}{0}+3\binom{q'}{1}+2\binom{q'}{00}+\binom{q'}{01}+4\binom{q'}{10}+2\binom{q'}{11}+1.$$
Hence, we get
$$\binom{p}{01}-\binom{p}{10}=-2\binom{p'}{1} +3\binom{p'}{01}-3\binom{p'}{10}+\delta_{a,1}
\biggr(1+3\binom{p'}{0}+3\binom{p'}{1}\biggr),$$
$$\binom{q}{01}-\binom{q}{10}=-\binom{q'}{0}-\binom{q'}{1}+3\binom{q'}{01}-3\binom{q'}{10}-1.$$
Since, $p\sim_2 q$, the last two relations evaluated modulo $3$ give
\begin{equation}
    \label{eq:t1}
    |p'|_1+\delta_{a,1}\equiv 2n+2 \pmod{3}.
\end{equation}
Similarly, the form of $r$ gives the following relations
$$\binom{r}{01}=2\binom{r'}{0}+2\binom{r'}{1} +2\binom{r'}{00}+4\binom{r'}{01}+\binom{r'}{10}+2\binom{r'}{11}+\delta_{b,0}
\biggr(1+\binom{r'}{0}+2\binom{r'}{1}\biggr),$$
$$\binom{r}{10}=2\binom{r'}{0}+\binom{r'}{1}+2\binom{r'}{00}+\binom{r'}{01}+4\binom{r'}{10}+2\binom{r'}{11}
+\delta_{b,1} \biggr(2\binom{r'}{0}+\binom{r'}{1}\biggr),$$
$$\binom{r}{01}-\binom{r}{10}=-2\binom{r'}{0}+3\binom{r'}{01}-3\binom{r'}{10}+\delta_{b,0}
\biggr(1+3\binom{r'}{0}+3\binom{r'}{1}\biggr)$$
Since, $p\sim_2 r$, the last two relations evaluated modulo $3$ give
\begin{equation}
    \label{eq:t2}
    |p'|_1+\delta_{a,1}\equiv |r'|_0+\delta_{b,0} \pmod{3}.
\end{equation}
Now, we take into account the fact that $p$, $q$ and $r$ are abelian equivalent to get a contradiction. The following two vectors are equal:
$$
\begin{pmatrix}
    |p|_0\\ |p|_1\\
\end{pmatrix}=
\begin{pmatrix}
    2&1\\ 1&2\\
\end{pmatrix}
\begin{pmatrix}
    |p'|_0\\ |p'|_1\\
\end{pmatrix}+
\begin{pmatrix}
    1+\delta_{a,0}\\ \delta_{a,1}\\
\end{pmatrix},\ 
\begin{pmatrix}
    |r|_0\\ |r|_1\\
\end{pmatrix}=
\begin{pmatrix}
    2&1\\ 1&2\\
\end{pmatrix}
\begin{pmatrix}
    |r'|_0\\ |r'|_1\\
\end{pmatrix}+
\begin{pmatrix}
    \delta_{b,0}\\1+\delta_{b,1}\\
\end{pmatrix}.$$
We derive easily that
$$
 |p'|_1-|r'|_1=\ 1+\delta_{a,0}-\delta_{b,0}.
$$
On the one hand, using the latter relation and \eqref{eq:t2}
$$|r'|_1+1+\delta_{a,0}-\delta_{b,0}+\delta_{a,1}=|p'|_1+\delta_{a,1}\equiv |r'|_0+\delta_{b,0} \pmod{3}$$
Replacing $|r'|_0$ by $n-|r'|_1$, we get $2|r'|_1+2\equiv n+2\delta_{b,0} \pmod{3}$, or equivalently
$$|r'|_1+1\equiv 2n+\delta_{b,0} \pmod{3}.$$
On the other hand, using \eqref{eq:t1},
$$|r'|_1+1+\delta_{a,0}-\delta_{b,0}+\delta_{a,1}=|p'|_1+\delta_{a,1}\equiv 2n+2\pmod{3}$$
and thus,
$$|r'|_1\equiv 2n+\delta_{b,0}\pmod{3}.$$
We get a contradiction, $2n+\delta_{b,0}$ should congruent to both $|r'|_1$ and $|r'|_1+1$ modulo $3$.
\end{proof}

We are ready to prove the main theorem of this section.

\begin{theorem}
Let $h:0\mapsto 001,1\mapsto 011$. 
For every $2$-binomial-cube-free word $w\in\{0,1\}^*$, $h(w)$ is $2$-binomial-cube-free.
\end{theorem}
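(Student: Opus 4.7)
The plan is to prove the contrapositive: assume for contradiction that $h(w)$ contains a $2$-binomial cube $pqr$ with $|p|=|q|=|r|=\ell$. I will show that this forces either a $2$-binomial cube in $w$ (contradicting the hypothesis) or a violation of Lemma~\ref{lem:cubeone} or Lemma~\ref{lem:cubetwo}. Let $s=\ell\bmod 3$ and let $t\in\{0,1,2\}$ be the offset, within an $h$-block of $h(w)$, at which $pqr$ begins. The structural remark that drives the whole case analysis is that $h(a)=0\,a\,1$ for both $a\in\{0,1\}$, so the characters at offsets $0,1,2$ of every block are $0$, the block's letter, and $1$ respectively; in particular the border characters of every factor of $h(w)$ are predictable.

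The argument splits on $s$. When $s=0$, each of $p,q,r$ begins and ends at the same block-offset $t$. For $t=0$ one has $p=h(p')$, $q=h(q')$, $r=h(r')$ for length-$k$ factors with $p'q'r'$ a factor of $w$; Proposition~\ref{prop:mh} together with the invertibility of $M_h$ gives $p'\sim_2 q'\sim_2 r'$, a $2$-binomial cube in $w$. For $t=1$, every factor among $p,q,r$ ends with a $0$ (the leading character of the following block); from $p\sim_2 q\sim_2 r$ Proposition~\ref{prop:cyc} deduces $0x_p\sim_2 0x_q\sim_2 0x_r$ where each $x_\cdot$ is the factor with its trailing $0$ removed. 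Using $0\cdot a\cdot 1=h(a)$, these three words simplify to $h(a_0P)$, $h(a_kQ)$, $h(a_{2k}R)$, where $a_0,a_k,a_{2k}$ are the block-letters at the starts of $p,q,r$ and $P,Q,R$ are the interior underlying $w$-factors. Invertibility of $M_h$ then yields $a_0P\sim_2 a_kQ\sim_2 a_{2k}R$, a length-$k$ $2$-binomial cube inside the factor $a_0a_1\cdots a_{3k-1}$ of $w$. The case $t=2$ is symmetric: each factor starts with a $1$, Proposition~\ref{prop:cyc} moves that $1$ to the tail, yielding $h$-images $h(Pa_k),h(Qa_{2k}),h(Ra_{3k})$, whence a cube in $w$.

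Cases $s=1$ and $s=2$ are dispatched by Lemma~\ref{lem:cubeone} and Lemma~\ref{lem:cubetwo} respectively. A direct offset computation using $h(a)=0a1$ shows that, for $s=1$ and any $t\in\{0,1,2\}$, one of $p,q,r$ has the form $h(u')\,0$, another has the form $a\,1\,h(u')\,0\,b$, and the third has the form $1\,h(u')$; only the cyclic assignment of roles to $p,q,r$ depends on $t$. Lemma~\ref{lem:cubeone} then implies that such a triple cannot satisfy $\sim_2$ on all pairs, contradicting $p\sim_2 q\sim_2 r$. The case $s=2$ is entirely analogous, with lengths $3n+2$ and with Lemma~\ref{lem:cubetwo} (whose normal forms are $h(u')\,0\,a$, $1\,h(u')\,0$, $b\,1\,h(u')$) replacing Lemma~\ref{lem:cubeone}.

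The main obstacle is the bookkeeping of the nine combinations of $(s,t)$; once the uniform shape $h(a)=0a1$ is exploited, they collapse into the three template situations above, and the role-symmetry in the two lemmas (which conclude that some pair fails $\sim_2$, irrespective of the syntactic role played by each factor) ensures nothing is lost under cyclic relabeling of $(p,q,r)$.
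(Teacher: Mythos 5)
Your proposal is correct and follows essentially the same route as the paper: a case split on $|p|\bmod 3$ and the block offset, desubstitution via Proposition~\ref{prop:mh} and the invertibility of $M_h$ in the length $\equiv 0$ case (using Proposition~\ref{prop:cyc} to handle offsets $1$ and $2$), and Lemmas~\ref{lem:cubeone} and~\ref{lem:cubetwo} for lengths $\equiv 1,2 \pmod 3$. The only cosmetic differences are that you rewrite the conjugated words directly as $h$-images instead of shifting the whole cube by one position, and you spell out the cyclic role assignment that the paper dismisses with ``w.l.o.g.''
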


\def\wt{h(w)}
\begin{proof}
Let $w$ be a $2$-binomial-cube-free binary word.
Assume that $h(w)= z_0 \ldots z_{3\vert
w\vert -1}$ contains a $2$-binomial cube $pqr$ occurring in position $i$, {\em i.e.}, $p\sim_2 q\sim_2 r$ and $w=w'\, p\, q\, r\, w''$, where $\vert w'\vert = i$.
We consider three cases depending on the size of $p$ modulo $3$.

\medskip

As a first case, assume that $|p|=3n$. We consider three sub-cases depending on the position $i$ modulo $3$.

1.a) Assume that $i\equiv 2 \pmod{3}$. Then $p,q,r$ have $1$ as a prefix and the letter following $r$ in $\wt$ is the symbol $z_{i+9n}=1$. Hence, the word $1^{-1}pqr1$ occurs in $\wt$ in position $i+1$ and it is again a $2$-binomial cube. Indeed, thanks to Proposition~\ref{prop:cyc}, we have $1^{-1}p1\sim_2 1^{-1}q1\sim_2 1^{-1}r1$. This case is thus reduced to the case where $i\equiv 0 \pmod{3}$. 

1.b) Assume that $i\equiv 1 \pmod{3}$. Then $p,q,r$ have $0$ as a suffix and the letter preceding $p$ in $\wt$ is the symbol $z_{i-1}=0$. Hence, the word $0pqr0^{-1}$ occurs in $\wt$ in position $i-1$ and it is also a $2$-binomial cube. Thanks to Proposition~\ref{prop:cyc}, we have $0p0^{-1}\sim_2 0q0^{-1}\sim_2 0r0^{-1}$. Again this case is reduced to the case where $i\equiv 0 \pmod{3}$. 

1.c) Assume that $i\equiv 0 \pmod{3}$. In this case, we can desubstitute: there exist three words $p',q',r'$ of length $n$ such that $h(p')=p$, $h(q')=q$, $h(r')=r$ and $p'q'r'$ is a factor occurring in $w$.
We have $\Psi_2(p)=\Psi_2(q)=\Psi_2(r)$. 
By Proposition~\ref{prop:mh}, and since $M_h$ is invertible,
we have $\Psi_2(p')=\Psi_2(q')=\Psi_2(r')$, meaning that $w$ contains a $2$-binomial cube $p'q'r'$.

\medskip

As a second case, assume that $|p|=3n+1$. 
In this case, one of $p$, $q$ and $r$ occur in position $0$ modulo $3$, one in position $1$ modulo $3$, and one in position $2$ modulo $3$.
Suppose w.l.o.g. that $p$ occur in position $0$ modulo $3$, and $q$ in position $1$ modulo $3$. Then there are three factors $p'$, $q'$ and $r'$ in $w$, and $a,b\in \{0,1\}$ such that 
$p=h(p')\,0$, $q= a\, 1\, h(q')\, 0\, b$ and $r=1\, h(r')$.
By Lemma~\ref{lem:cubeone}, this is impossible.

\medskip

For the final case, assume that $|p|=3n+2$.
In this case again, one of $p$, $q$ and $r$ occur in position $0$ modulo $3$, one in position $1$ modulo $3$, and one in position $2$ modulo $3$.
Suppose w.l.o.g. that $p$ occur in position $0$ modulo $3$, and $q$ in position $1$ modulo $3$. Then there are three factors $p'$, $q'$ and $r'$ in $w$, and $a,b\in \{0,1\}$ such that 
$p=h(p')\,0\, a$, $q=1\, h(q')\, 0$ and $r=b\, 1\, h(r')$.
By Lemma~\ref{lem:cubetwo}, this is impossible.
%
\end{proof}

\begin{corollary}
    The infinite word $\mathbf{z}=001001011\cdots$ fixed point of $h:0\mapsto 001,1\mapsto 011$ avoids $2$-binomial cubes.
\end{corollary}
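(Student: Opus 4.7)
The plan is to derive the corollary directly from the preceding theorem by a straightforward induction on the iterates of $h$. Since the corollary asserts that $\mathbf{z}=h^\omega(0)$ avoids $2$-binomial cubes, and the theorem tells us that $h$ preserves $2$-binomial-cube-freeness, the argument essentially writes itself.

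First I would observe that the single-letter word $0$ is trivially $2$-binomial-cube-free (indeed any word of length less than $3$ is). Then, invoking the theorem with $w=h^n(0)$, a simple induction on $n$ shows that $h^n(0)$ is $2$-binomial-cube-free for every $n\ge 0$: if $h^n(0)$ contains no $2$-binomial cube, then neither does $h^{n+1}(0)=h(h^n(0))$.

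Next I would transfer this property to the infinite fixed point. Because $h(0)=001$ begins with $0$, the morphism is prolongable on $0$, and $h^n(0)$ is a prefix of $h^{n+1}(0)$ for all $n$, so that $\mathbf{z}=\lim_{n\to\infty} h^n(0)$ is well defined and every finite factor of $\mathbf{z}$ occurs already as a factor of $h^n(0)$ for $n$ sufficiently large. Consequently, if $\mathbf{z}$ contained a $2$-binomial cube $pqr$, then the finite factor $pqr$ would appear in some $h^n(0)$, contradicting the induction above.

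There is no real obstacle: the only point requiring a line of justification is that every finite factor of the fixed point appears in some finite iterate, which follows from prolongability. The corollary is therefore an immediate consequence of the theorem.
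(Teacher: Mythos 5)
Your argument is correct and is exactly the reasoning the paper intends when it calls the statement a direct corollary of the theorem: induct on $n$ to see that each $h^n(0)$ is $2$-binomial-cube-free, then use prolongability of $h$ on $0$ so that every factor of $\mathbf{z}$ already occurs in some $h^n(0)$. Nothing is missing.
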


\end{document}